\documentclass[11pt]{article}


\usepackage{mathtools}
\usepackage{microtype}
\allowdisplaybreaks

\DeclareMathOperator{\gal}{Gal}
\usepackage{seqsplit}
\usepackage{authblk}

\newcommand{\bbbq}{\mathbb{Q}}

\usepackage{amssymb, amsmath, amsthm}

\newtheorem{thm}{Theorem}
\newtheorem{lem}[thm]{Lemma}

\newtheorem{obs}{Remark}

\usepackage{fullpage}



\newcommand{\p}{\mathcal{S}}

\newcommand*\samethanks[1][\value{footnote}]{\footnotemark[#1]}

\begin{document}

\title{A Note on the Unsolvability of the\\ Weighted Region Shortest Path Problem\thanks{A preliminary version appeared at EuroCG 2012 \cite{eurocg}.}}

\author[1]{Jean-Lou De Carufel\thanks{Research supported by FQRNT.}}
\author[1,2]{Carsten Grimm}
\author[1]{Anil Maheshwari\thanks{Research supported by NSERC.}}
\author[3]{Megan Owen\thanks{Research supported by a Fields-Ontario Postdoctoral Fellowship.}}
\author[1]{Michiel Smid\samethanks[3]}

\affil[1]{School of Computer Science, Carleton University, Ottawa, Canada}
\affil[2]{Fakut\"at f\"ur Informatik, Otto-von-Guericke-Universit\"at Magdeburg, Magdeburg, Germany}
\affil[3]{Cheriton School of Computer Science, University of Waterloo, Waterloo, Canada}

\maketitle

\begin{abstract} 
Let $\p$ be a subdivision of the plane into polygonal regions, where each region has an associated positive weight. 
The {\em weighted region shortest path problem} is to determine a shortest path in $\p$ between two points 
$s,t\in \mathbb{R}^2$, where the distances are measured according to the weighted Euclidean metric---the length of a path is defined to be the weighted sum of (Euclidean) lengths of the sub-paths within each region.
We show that this problem cannot be solved in the \emph{Algebraic Computation Model over the Rational Numbers} ({\sf ACM$\bbbq$}). 
In the {\sf ACM$\bbbq$}, one can compute exactly any number that can be obtained from the rationals $\bbbq$ by applying a finite number of operations from $+$, $-$, $\times$, $\div$, $\sqrt[k]{\phantom{\cdot}}$, for any integer $k\geq 2$. 
Our proof uses Galois theory and is based on Bajaj's technique.
\end{abstract}	
 
\maketitle

\section{Introduction} \label{sec::introduction}

The weighted region shortest path problem is one of the classical path problems in Computational Geometry and 
has been studied over the last two decades. It was originally introduced
by Mitchell and Papadimitriou~\cite{mitchell1991weighted}
as a generalization of the two-dimensional shortest path problem with obstacles. There are several
well known approximation algorithms for this problem
(see~\cite{aleksandrov2005determining,bose2011survey,Mitchell98geometricshortest,mitchell1991weighted} for instance).
In this paper, we show that determining the exact shortest path distance in this setting is an unsolvable problem in an algebraic model of computation, confirming the  suspicion expressed
by Mitchell and Papadimitriou~\cite[Section~4]{mitchell1991weighted}.
Thus, we provide further justification for the search for approximate solutions as opposed to exact ones. 

The algebraic complexity of geometric optimization problems was first studied by Bajaj, who showed that Euclidean shortest paths among polyhedral obstacles in three dimensions \cite{Bajaj_tech} and solutions to the Weber problem and its variations \cite{DBLP:journals/dcg/Bajaj88} cannot be expressed as finite algebraic expressions.
More recently, the algebraic complexity of semi-definite programming \cite{SDP_alg} and shortest paths through certain cube complexes \cite{CAT0} were investigated. 
De Carufel et al.~\cite{DBLP:journals/corr/abs-1212-1617}
studied a variant of the Fr\'echet distance
that has a lower sensitivity to the presence of outliers
than the usual one.
They showed that this variant cannot be computed exactly whithin the
\emph{Algebraic Computation Model over the Rational Numbers} ({\sf ACM$\bbbq$}).
In the {\sf ACM$\bbbq$}, one can compute exactly any number that can be obtained from the rationals $\bbbq$ by applying a finite number of operations from $+$, $-$, $\times$, $\div$, $\sqrt[k]{\phantom{\cdot}}$, for any integer $k\geq 2$.
In this paper,
we employ Bajaj's technique \cite{DBLP:journals/dcg/Bajaj88} to show that
the weighted region shortest path problem is unsolvable
within the {\sf ACM$\bbbq$}. The technique is as follows.

As a consequence of the fundamental theorem of Galois \cite{dummit}, we know that there is no general formula to solve a polynomial equation of degree \(d \ge 5\) \emph{by radicals}. However, there are some polynomial equations of degree $d \geq 5$ that can be solved by radicals. The \emph{Galois group} $\gal(p)$ of an irreducible polynomial $p$ over $\bbbq$ determines the solvability of $p$ by radicals: the equation $p(x)=0$ is solvable by radicals if and only if $\gal(p)$ is \emph{solvable} (refer to~\cite{dummit}).
Intuitively,
$p$ is unsolvable by radicals if its coefficients are algebraically independent, i.\,e., not related by an algebraic expression.

We will present an instance of the weighted region shortest path problem such that solving this instance exactly within the {\sf ACM$\bbbq$} is equivalent to the statement that the polynomial equation $p_{12}(x)=0$ in Equation~\eqref{eqn:p12} is solvable by radicals.
However,
we will show that the Galois group of $p_{12}$ is $S_{12}$
(i.e., the symmetric group over $12$ elements) up to isomorphism. This is proved using the following theorem.\footnote{Alternatively, it can be verified using symbolic computation software.  For example, GAP uses the algorithm from \cite{GAP} to test the solvability of polynomials up to degree \(15\) via the command \texttt{isSolvable}, and MAGMA implements an extension of the algorithm in \cite{Geissler}, that works for polynomials of arbitrary degree, limited only by time and space constraints.}

\begin{thm}[Bajaj~\cite{DBLP:journals/dcg/Bajaj88}]
\label{theorem bajaj}
Let $p$ be a polynomial of even degree $d\geq 6$.
Suppose that there are three prime numbers $q_1$, $q_2$ and $q_3$ that do
not divide the discriminant $\Delta(p)$ of $p$, such that
\begin{eqnarray}
\label{thm bajaj cond 1}
p(x) &\equiv& p_{d}(x) \pmod{q_1} \enspace,\\
\label{thm bajaj cond 2}
p(x) &\equiv& p_1(x)p_{d-1}(x) \pmod{q_2} \enspace,\\
\label{thm bajaj cond 3}
p(x) &\equiv& p_1'(x)p_2(x)p_{d-3}(x) \pmod{q_3} \enspace,
\end{eqnarray}
where $p_d(x)$ is an irreducible polynomial of degree $d$ modulo $q_1$;
$p_{d-1}(x)$ (respectively $p_1(x)$) is an irreducible polynomial of degree
$d-1$ (respectively of degree $1$) modulo $q_2$;
$p_{d-3}(x)$ (respectively $p_1'(x)$ and $p_2(x)$) is an irreducible polynomial of degree
$d-3$ (respectively of degree $1$ and of degree $2$) modulo $q_3$. Then
$\gal(p) \cong S_d$.

If $d\geq 5$ is odd,
the same result holds if we replace (\ref{thm bajaj cond 3}) by
\begin{eqnarray*}
p(x) &\equiv& p_2(x)p_{d-2}(x) \pmod{q_4} \enspace,
\end{eqnarray*}
where $q_4$ is a prime number such that $q_4\nmid\Delta(p)$
and $p_{d-2}(x)$ (respectively $p_2(x)$) is an irreducible polynomial of degree
$d-2$ (respectively of degree $2$) modulo $q_4$.
\end{thm}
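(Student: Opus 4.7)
The plan is to combine Dedekind's theorem, which links the factorization of $p$ modulo a prime $q$ to cycle types of elements in $\gal(p)$, with classical facts about primitive permutation groups. I view $\gal(p)$ as a subgroup of $S_d$ via its action on the roots of $p$. Since each $q_i$ is prime and $q_i\nmid\Delta(p)$, the reduction of $p$ modulo $q_i$ is separable, and Dedekind's theorem guarantees that if $p\bmod q$ factors into distinct monic irreducibles of degrees $d_1,\dots,d_k$, then $\gal(p)$ contains an element of cycle type $(d_1,\dots,d_k)$. Applying this to \eqref{thm bajaj cond 1} yields a $d$-cycle $\sigma_1\in\gal(p)$, so $\gal(p)$ acts transitively on the roots. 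Applying it to \eqref{thm bajaj cond 2} produces $\sigma_2\in\gal(p)$ of cycle type $(1,d-1)$: if $a$ is its unique fixed point, then $\sigma_2$ sits in the stabilizer $\gal(p)_a$ and permutes the remaining $d-1$ roots in a single cycle, hence transitively. Transitivity of $\gal(p)$ together with transitivity of a one-point stabilizer is the definition of $2$-transitivity, and $2$-transitive actions are primitive.

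Next, I extract a transposition from the third hypothesis. In the even case, \eqref{thm bajaj cond 3} gives $\sigma_3\in\gal(p)$ of cycle type $(1,2,d-3)$. Since $d\geq 6$ is even, $d-3\geq 3$ is odd, and $\sigma_3^{d-3}$ decomposes as a fixed point, the transposition raised to an odd power (still a transposition), and the $(d-3)$-cycle raised to its own length (the identity); the upshot is that $\sigma_3^{d-3}$ is a single transposition. In the odd case, the alternative hypothesis produces an element of cycle type $(2,d-2)$; since $d\geq 5$ is odd, $d-2$ is odd, and raising to the $(d-2)$-th power again yields a transposition by the same argument. In either parity, $\gal(p)$ contains a transposition.

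The conclusion is then immediate from the classical result that a primitive subgroup of $S_d$ containing a transposition equals $S_d$; therefore $\gal(p)\cong S_d$. The delicate point I expect is the precise invocation of Dedekind's theorem: one must confirm that $q_i\nmid\Delta(p)$ (equivalently, separability of $p\bmod q_i$) is exactly what lets one lift the Frobenius to an honest element of $\gal(p)$ with the prescribed cycle type, rather than merely bounding the orbit lengths. The remaining permutation-group inputs, namely ``transitive with a $(d{-}1)$-cycle implies $2$-transitive'' and ``primitive containing a transposition equals $S_d$'', are standard and can be cited from any text on permutation groups.
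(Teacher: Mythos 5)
Your proof is correct and follows essentially the same route as the paper, which only sketches the argument (citing Bajaj): irreducibility modulo $q_1$ gives transitivity, the $(1,d-1)$ and $(1,2,d-3)$ factorization patterns give the corresponding cycle types via Dedekind's theorem, and these force $\gal(p)\cong S_d$. You usefully fill in the details the paper leaves implicit---that the $(d-1)$-cycle upgrades transitivity to $2$-transitivity (hence primitivity), that raising the $(1,2,d-3)$-element to the odd power $d-3$ yields a transposition, and that a primitive group containing a transposition is all of $S_d$.
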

Observe that 
\eqref{thm bajaj cond 1} implies that $p(x)$ is irreducible over $\bbbq$,
which implies that $\gal(p)$ is a \emph{transitive} group.
Conditions (\ref{thm bajaj cond 2}) and (\ref{thm bajaj cond 3})
guarantee the existence of a $(d-1)$-cycle
and an element with cycle decomposition $(2,d-3)$ in $\gal(p)$.
These two elements,
together with the transitivity of $\gal(p)$,
imply that $\gal(p)\cong S_d$.
\begin{lem}[{\cite[Chapter~4]{dummit}}] \label{thm::symmetric_is_unsolvable}
A symmetric group $S_n$ over $n$ elements is solvable if and only if $n\leq 4$.
\end{lem}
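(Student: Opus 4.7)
The plan is to prove both directions by explicitly exhibiting (or ruling out) a subnormal series with abelian factors, i.e., to use the standard equivalent characterization of solvability via a composition series whose successive quotients are abelian.

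For the forward direction ($n \le 4$ implies $S_n$ is solvable), I would just write down an explicit chain in each small case. For $n=1,2$ the group $S_n$ is already abelian. For $n=3$, the chain $\{e\} \triangleleft A_3 \triangleleft S_3$ has quotients $A_3 \cong \mathbb{Z}/3\mathbb{Z}$ and $S_3/A_3 \cong \mathbb{Z}/2\mathbb{Z}$, both abelian. For $n=4$, I would use $\{e\} \triangleleft V_4 \triangleleft A_4 \triangleleft S_4$, where $V_4 = \{e,(12)(34),(13)(24),(14)(23)\}$ is the Klein four-group, which is normal in $A_4$ (and even in $S_4$). The successive quotients are $V_4$ (abelian of order $4$), $A_4/V_4 \cong \mathbb{Z}/3\mathbb{Z}$, and $S_4/A_4 \cong \mathbb{Z}/2\mathbb{Z}$; all are abelian, so $S_4$ is solvable. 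The only bit of bookkeeping here is to check that $V_4$ really is a normal subgroup of $A_4$, which follows from the fact that $V_4$ is the union of the identity and all elements of cycle type $(2,2)$, a union of conjugacy classes in $S_4$.

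For the converse ($n \ge 5$ implies $S_n$ is not solvable), the cleanest route is to show that the derived series of $S_n$ stabilizes without reaching the trivial subgroup. Concretely, I would prove:
\begin{enumerate}
\item $[S_n, S_n] = A_n$, since every $3$-cycle $(a\,b\,c)$ is a commutator, for instance $(a\,b\,c)=[(a\,b),(a\,c)]$, and the $3$-cycles generate $A_n$; while commutators of even permutations are even, forcing the derived subgroup into $A_n$.
\item $[A_n, A_n] = A_n$ for $n \ge 5$. The key calculation is to express an arbitrary $3$-cycle as a commutator of two $3$-cycles in $A_n$. For five or more symbols one can pick two disjoint supports and use an identity like $(a\,b\,c) = [(a\,b\,d),(a\,c\,e)]$, where $d,e$ are two additional symbols distinct from $a,b,c$.
\end{enumerate}
Consequently the derived series of $S_n$ for $n \ge 5$ becomes $S_n \supsetneq A_n = A_n = A_n = \cdots$ and never reaches $\{e\}$, so $S_n$ is not solvable.

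The main obstacle is the commutator identity in step (2): one must verify that every $3$-cycle in $A_n$ is itself a commutator of even permutations, which requires the presence of at least two extra symbols and is precisely where $n \ge 5$ enters. Everything else (the composition series for $n \le 4$ and the $[S_n,S_n]=A_n$ computation) is routine permutation arithmetic.
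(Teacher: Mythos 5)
Your proof is correct: the paper gives no proof of this lemma, citing it directly from Dummit and Foote, and your argument (explicit subnormal series with abelian quotients for $n\le 4$, and the derived series $S_n \supseteq A_n = A_n = \cdots$ stabilizing away from $\{e\}$ via the perfectness of $A_n$ for $n\ge 5$) is exactly the standard textbook route found in that reference. The one identity you flagged as the crux, $(a\,b\,c) = [(a\,b\,d),(a\,c\,e)]$ with $d,e$ fresh symbols, does check out under the convention $[x,y]=xyx^{-1}y^{-1}$ (and yields the inverse $3$-cycle under the other convention, which is equally good), so there is no gap.
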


\section{Unsolvability}\label{sec::unsolvability}%

Consider the situation depicted in Fig.~\ref{figure parallel example},
\begin{figure}
\centering 
\includegraphics[scale = 0.9]{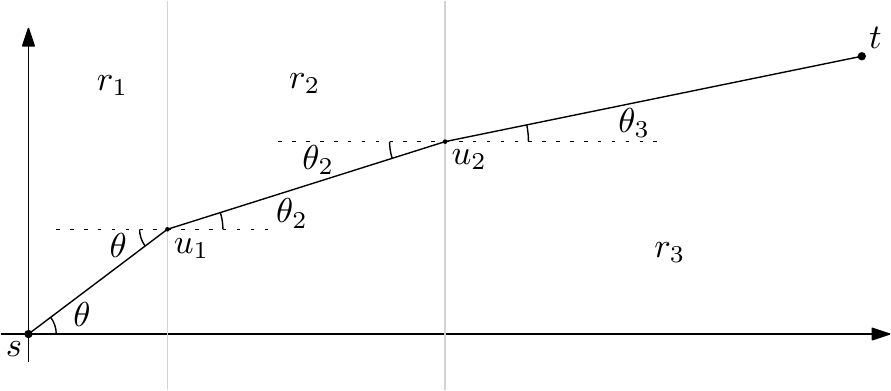}
\caption{An instance of the weighted region shortest path problem where the shortest path has two bends, namely $u_1$ and $u_2$.\label{figure parallel example}}
\end{figure}%
where $s = (0,0)$ is
the source and $t = (6,2)$ is the target. The three regions $r_1$, $r_2$
and $r_3$ have weights $w_1 = 1$, $w_2 = 2$ and $w_3 = 3$, respectively.
The three regions are $r_1=\{(x,y)\in\mathbb{R}^2\mid x\leq 1\}$,
$r_2=\{(x,y)\in\mathbb{R}^2\mid 1\leq x\leq 3\}$
and $r_3=\{(x,y)\in\mathbb{R}^2\mid x \geq 3\}$.

The optimal path satisfies Snell-Descartes law~\cite{mitchell1991weighted}.
We denote by $\theta_i$ the angle made by the incident ray in $r_i$
($1 \leq i \leq 3$).
For simplicity,
we let $\theta = \theta_1$.
Hence,
we must have $\sin(\theta_2) = \frac{w_1}{w_2}\sin(\theta)$ and $\sin(\theta_3) = \frac{w_1}{w_3}\sin(\theta)$.

Since the sum of the vertical distances travelled in all regions must
be equal to the $y$-coordinate of $t$,
we need to solve
$$\tan(\theta) + 2\tan(\theta_2) + 3\tan(\theta_3) = 2.$$
Since $\tan(\theta) = \frac{\sin(\theta)}{\sqrt{1-\sin^2\!(\theta)}}$ for $0 \leq \theta < \frac{1}{2}\pi$,
this can be rewritten as
$$\phi(X)=\frac{X}{\sqrt{1-X^2}} + 2\frac{\frac{w_1}{w_2}X}{\sqrt{1-\left(\frac{w_1}{w_2}X\right)^2}} + 3\frac{\frac{w_1}{w_3}X}{\sqrt{1-\left(\frac{w_1}{w_3}X\right)^2}} = 2 \enspace,$$
where $X = \sin(\theta)$.
By appropriately squaring three times,
this can be transformed into
\begin{align}
\label{eqn:p12}
p_{12}(u)\;=\;&419904-3545856u+12394944u^2-24006816u^3+28904608u^4-22882588u^5\\
\nonumber
&+12204109u^6-4396586u^7+1060979u^8-168272u^9+16843u^{10}-970u^{11}+25u^{12} = 0\enspace,
\end{align}
where $u = \sqrt{X}$.
\begin{thm}
The weighted region shortest path problem cannot be
solved exactly within the {\sf ACM$\bbbq$}.
\end{thm}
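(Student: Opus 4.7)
The plan is to specialize to the instance depicted in Figure~\ref{figure parallel example}; showing that this one instance cannot be solved exactly in the ACM$\bbbq$ already implies the theorem. The derivation preceding the statement already reduces the instance to the polynomial equation $p_{12}(u)=0$ from Equation~\eqref{eqn:p12} via $u=\sqrt{\sin\theta}$: the optimal entry angle $\theta$ at $s$ is governed by Snell-Descartes together with the vertical-distance condition $\tan\theta+2\tan\theta_2+3\tan\theta_3=2$, which after clearing the nested square roots by squaring three times becomes $p_{12}(u)=0$. Once $u$ is available, the bend points $u_1,u_2$ and the total weighted length are recovered by rational operations and square roots, all of which are available in the ACM$\bbbq$. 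Hence any ACM$\bbbq$-procedure returning the exact shortest path length for this instance would also produce a radical expression over $\bbbq$ for $u$, i.e., would solve $p_{12}(x)=0$ by radicals.

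The core step is then to apply Theorem~\ref{theorem bajaj} to $p_{12}$ with $d=12$. Since $d$ is even, I must exhibit three primes $q_1,q_2,q_3$, all coprime to $\Delta(p_{12})$, such that modulo $q_1$ the polynomial $p_{12}$ is irreducible of degree $12$; modulo $q_2$ it splits as an irreducible linear factor times an irreducible degree-$11$ factor; and modulo $q_3$ it splits as an irreducible linear, times an irreducible quadratic, times an irreducible degree-$9$ factor. I would search through small primes, avoiding $q=5$ (which divides the leading coefficient $25$) and any prime divisor of $\Delta(p_{12})$, then reduce $p_{12}$ modulo each candidate $q$ and factor the result in the corresponding finite field, e.g.\ via Berlekamp's algorithm. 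By Chebotarev's density theorem, if $\gal(p_{12})\cong S_{12}$ such primes exist with positive density, so a bounded search is guaranteed to succeed.

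The main obstacle is precisely this explicit factorization step: one must first determine (or at least bound) the integer $\Delta(p_{12})$, whose coefficients are large, and then verify the required irreducibility patterns modulo each candidate prime. This is a finite but tedious calculation, of exactly the sort that the footnote attributes to computer algebra systems such as GAP or MAGMA; carrying it out by hand with carefully chosen small primes is the heart of the proof. Once the three witnessing primes are produced, Theorem~\ref{theorem bajaj} yields $\gal(p_{12})\cong S_{12}$. By Lemma~\ref{thm::symmetric_is_unsolvable}, $S_{12}$ is not solvable, so by the Galois correspondence the equation $p_{12}(x)=0$ is not solvable by radicals over $\bbbq$. Combined with the reduction of the first paragraph, this rules out any exact ACM$\bbbq$-procedure for the instance of Figure~\ref{figure parallel example}, and the theorem follows.
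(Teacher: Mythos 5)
Your proposal follows exactly the paper's argument: reduce the instance of Fig.~\ref{figure parallel example} to $p_{12}(u)=0$, apply Theorem~\ref{theorem bajaj} with three suitable primes to conclude $\gal(p_{12})\cong S_{12}$, and finish via Lemma~\ref{thm::symmetric_is_unsolvable}. The one piece you leave undone---finding the discriminant and the witnessing primes---is precisely the computational content the paper supplies, namely $\Delta(p)=2^{57}\cdot 3^{98}\cdot 5^{22}\cdot 1847\cdot 814585609$ together with $q_1=79$, $q_2=31$, $q_3=11$ and the corresponding factorizations modulo each prime, so your outline is correct and matches the paper but still requires exhibiting these witnesses to be a complete proof.
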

\begin{proof}
Following the notation of Theorem~\ref{theorem bajaj}, and the above example, we have
$p(u) = p_{12}(u)$, $d = 12$ and $\Delta(p) = 2^{57}\cdot 3^{98} \cdot 5^{22} \cdot 1847\cdot 814585609$.

With numerical methods,
one finds that
for $0 \leq \theta < \frac{1}{2}\pi$,
there exists a unique number $\alpha$ such that $\phi(\alpha) = 2$.
This number $\alpha$ is such that $0.60206 < \alpha < 0.60208$.

However,
with $q_1=79$, $q_2=31$ and $q_3=11$,
one finds
{\small
\begin{eqnarray*}
p(x) &\equiv& 19+59u+2u^2+20u^3+9u^4+78u^5+31u^6+u^7+9u^8+77u^9+16u^{10}+57u^{11}+25u^{12} \pmod{79},\\
p(x) &\equiv& 25(20+u)(10+27u+u^2+6u^3+30u^4+14u^5+5u^6+28u^7+12u^8+17u^9+28u^{10}+u^{11}) \pmod{31},\\
p(x) &\equiv& 3(u+9)(8+u+u^2)(8+10u+3u^2+6u^3+3u^4+u^5+6u^7+4u^8+u^9) \pmod{11}.
\end{eqnarray*}
}
Therefore,
$\gal(p) \cong S_{12}$ by Theorem~\ref{theorem bajaj}.
Moreover,
Lemma~\ref{thm::symmetric_is_unsolvable} tells us that $S_{12}$
is non-solvable.

Hence,
$\alpha$ cannot be computed within the {\sf ACM$\bbbq$}
otherwise this would contradict the non-solvability of $S_{12}$.
Therefore, in general, the weighted region shortest path problem
cannot be solved exactly within the {\sf ACM$\bbbq$}.
\end{proof}

\begin{obs}
If a problem is solvable within the {\sf ACM\(\bbbq\)}, then we can express its solution as a finite sequence of the allowed operations on the rational input data. For practical applications however, we may need to rely on approximations of such an explicit representation, due to the occurrence of roots. The latter can hardly be avoided for the weighted region shortest path problem, as the length of a path is the weighted sum of Euclidean distances. A problem may be unsolvable in the {\sf ACM\(\bbbq\)} even though its solution can be approximated with sufficient precision in practice. Nonetheless, we use the {\sf ACM\(\bbbq\)} as a viewpoint to gain insights about algebraic complexity and applicability of symbolic computation. One of the advantages of symbolic computation is the reusability of a result without cascaded approximation error. Unsolvability on the other hand concludes any search for a closed formula for solutions and provides further justification for the employment of approximation approaches.
\end{obs}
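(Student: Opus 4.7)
The plan is to treat this Remark as a sequence of distinct assertions and sketch a justification for each, rather than a single theorem with a unified proof. Four claims need support: (i) solvability in the {\sf ACM$\bbbq$} yields an explicit closed-form expression over $\bbbq$; (ii) for the weighted region problem, such expressions necessarily involve roots; (iii) unsolvability in the {\sf ACM$\bbbq$} is compatible with accurate numerical approximation; and (iv) the {\sf ACM$\bbbq$} is nonetheless a useful viewpoint. Claims (i) and (ii) are largely definitional/observational, while (iii) and (iv) are the ones where a little argument is warranted.

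For (i), I would simply restate the definition of {\sf ACM$\bbbq$} given in the abstract: a solvable problem is, by definition, one whose solution is obtainable from rationals via finitely many operations drawn from $\{+,-,\times,\div,\sqrt[k]{\phantom{\cdot}}\}$. That is the explicit representation. For (ii), I would point out that Euclidean distance between rational points $(x_1,y_1),(x_2,y_2)$ equals $\sqrt{(x_1-x_2)^2+(y_1-y_2)^2}$, which is already irrational in generic position, and that the path length in $\p$ is by definition the weighted sum of such quantities. Hence even computing the length of a specified piecewise-linear path with rational bends involves a sum of square roots; in general there is no way to collapse this into a radical-free expression.

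For (iii), the cleanest witness is the very theorem just proved: the proof explicitly records $0.60206<\alpha<0.60208$, obtained by numerical methods, even though the theorem then shows $\alpha$ has no {\sf ACM$\bbbq$} representation. More generally, I would observe that any real root of an integer polynomial can be isolated to arbitrary precision by standard interval methods (bisection or Newton iteration on the defining polynomial), so unsolvability by radicals imposes no barrier to approximation. This already suffices to separate the notions ``computable in {\sf ACM$\bbbq$}'' and ``approximable to prescribed precision''.

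For (iv), the argument is methodological rather than mathematical: I would contrast a symbolic answer, which can be substituted into subsequent symbolic computations without accruing error, with a numerical answer, whose uncertainty compounds under iteration and composition. Unsolvability then has a definitive consequence---it forecloses the symbolic workflow and thereby justifies the investment in approximation algorithms referenced in the introduction. The main and essentially only obstacle is rhetorical: because the Remark is a commentary rather than a theorem, the task is to calibrate how much formal justification versus plain discussion is appropriate, and to avoid overclaiming (for instance, not to suggest that numerical approximation is always easy, or that unsolvability in {\sf ACM$\bbbq$} implies any stronger form of hardness such as uncomputability in the bit model).
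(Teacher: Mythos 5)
The paper offers no proof of this Remark---it is pure commentary---and your reading of it as a sequence of definitional and methodological observations, each supported by material already present in the paper (the definition of {\sf ACM$\bbbq$} from the abstract, the numerical bracketing $0.60206<\alpha<0.60208$ as a witness that unsolvability does not preclude approximation), matches the paper's intent exactly. Your justifications are correct and appropriately calibrated; nothing further is required.
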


\begin{obs}
Let $\mathcal{P}$ be a problem that can be translated into a (system of) polynomial equation(s), and assume that we want to use 
Theorem~\ref{theorem bajaj} to prove that $\mathcal{P}$
cannot be solved exactly within the {\sf ACM$\bbbq$}. 
In general,
$\mathcal{P}$ admits infinitely many different instances leading to
infinitely many different polynomial equations.
Our experience shows that most of the time, Theorem~\ref{theorem bajaj} applies on the first instance of $\mathcal{P}$
we can think of. Otherwise, one can use a symbolic computation software as a black box and compute
$\gal(p)$. To use Theorem~\ref{theorem bajaj}, we need to find three prime numbers that
satisfy the constraining properties. Bajaj~\cite{DBLP:journals/dcg/Bajaj88}
explains why trying $d+1$ prime numbers that do not divide $\Delta(p)$ will
most likely be sufficient. As for the factorization of a polynomial modulo a
prime number, refer to~\cite{Cohen:1995:CCA:206777} for standard algorithms
that perform this task.
\end{obs}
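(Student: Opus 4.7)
The target remark is not a formal theorem but a collection of methodological claims about applying Theorem~\ref{theorem bajaj}; my plan is to treat it as a list of informal propositions and justify each in turn. First, the claim that $\mathcal{P}$ admits infinitely many instances is essentially definitional for any geometric problem with continuous parameters (coordinates of vertices, region weights, positions of $s$ and $t$), so distinct parameter values yield distinct polynomial equations. This step requires no real argument beyond pointing out that such instances can be generated by perturbing the input.

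The substantive claims are the two heuristics: that the first instance usually works, and that testing on the order of $d+1$ primes typically suffices to produce the three factorizations demanded by Theorem~\ref{theorem bajaj}. To justify them I would invoke two classical results of algebraic number theory. Dedekind's theorem states that, for every prime $q$ not dividing the discriminant $\Delta(p)$, the cycle type of the Frobenius element $\mathrm{Frob}_q$ in $\gal(p)$ coincides with the factorization pattern of $p$ modulo $q$. Chebotarev's density theorem then asserts that, as $q$ varies over primes not dividing $\Delta(p)$, $\mathrm{Frob}_q$ equidistributes over the conjugacy classes of $\gal(p)$. Combining these, whenever the "generic" polynomial associated with a random instance of $\mathcal{P}$ has Galois group $S_d$, each cycle type of $S_d$ occurs among the factorization patterns of $p$ modulo $q$ with a positive asymptotic density. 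In particular, the required cycle types $(d)$, $(1, d-1)$, and $(1, 2, d-3)$ occur with densities of the form $1/d$, proportional to $1/(d-1)$, and a related fraction; a short counting argument, together with the prime number theorem, explains Bajaj's heuristic that scanning roughly $d+1$ admissible primes is likely to hit all three patterns. This also supports the companion claim that the ``first instance one thinks of'' usually works: a random polynomial is expected to have Galois group $S_d$ (cf.\ Hilbert's irreducibility theorem and van der Waerden-type statements), so a randomly chosen instance inherits the density statement above.

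For the remaining pointers I would simply cite existing algorithmic work: for computing $\gal(p)$ from its coefficients, the resolvent-based methods of~\cite{GAP} and~\cite{Geissler} implemented in GAP and MAGMA; for factoring $p$ modulo a prime, the Berlekamp/Cantor--Zassenhaus-type algorithms in~\cite{Cohen:1995:CCA:206777}. These are black-box facts, requiring no independent argument here.

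The main obstacle is turning the phrases \emph{most of the time} and \emph{will most likely be sufficient} into rigorous bounds: Chebotarev only provides an asymptotic density, so deducing a provable upper bound on the least qualifying prime requires an effective version of Chebotarev (e.g., under GRH), and quantifying ``most instances'' would require a precise model of random inputs to $\mathcal{P}$. Since the remark is explicitly framed as a practical guideline rather than a theorem, I would not attempt to sharpen these to unconditional statements; I would instead present the Chebotarev/Dedekind picture as the theoretical reason the heuristic performs as well as it does in practice, and leave the empirical portion supported by the authors' own experience and by the analogous successful applications of the technique in~\cite{DBLP:journals/dcg/Bajaj88,DBLP:journals/corr/abs-1212-1617}.
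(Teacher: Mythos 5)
The paper offers no proof of this remark: it is stated as an informal methodological observation, with the only substantive claim (that trying $d+1$ primes not dividing $\Delta(p)$ will most likely suffice) delegated entirely to the citation of Bajaj. Your reconstruction via Dedekind's theorem and the Chebotarev density theorem is exactly the standard justification that underlies Bajaj's cited explanation, so you are not taking a conflicting route but rather supplying the argument the paper leaves implicit: the factorization pattern of $p$ modulo a good prime $q$ matches the cycle type of the Frobenius element, these equidistribute over conjugacy classes of $\gal(p)$, and when $\gal(p)\cong S_d$ the three required cycle types $(d)$, $(1,d-1)$ and $(1,2,d-3)$ occur with densities $1/d$, $1/(d-1)$ and $1/(2(d-3))$ respectively, each roughly $1/d$, which is what makes a scan of about $d+1$ primes plausible. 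You also correctly identify the two places where rigor genuinely cannot be achieved at this level --- Chebotarev is asymptotic (an unconditional bound on the least qualifying prime needs an effective version, e.g.\ under GRH), and ``most instances'' presupposes a model of random inputs --- and the paper makes no attempt to address either, so your decision to leave these as heuristics matches the paper's intent. The one caveat worth noting is that your appeal to Hilbert irreducibility to argue that a ``generic'' instance has Galois group $S_d$ is itself only a genericity statement over parameter specializations and does not by itself certify any particular instance; the paper's own position is simply the empirical claim that the first instance one tries usually works, which is all the remark asserts. In short: your proposal is correct, consistent with the paper, and more informative than the paper's own (nonexistent) justification.
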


\section{Generalization to $n$ Regions}
We have shown that one instance
of the weighted region shortest path problem is unsolvable,
which shows this problem is unsolvable in general.
One usual way of getting around this problem is to assume that
we work in a model of computation where it takes $O(1)$ time
to solve any polynomial equation of bounded degree.
However,
we can extend our example to $n$ regions,
for arbitrarily large values of $n$,
where we get to solve
$$\tan(\theta) + 2\tan(\theta_2) + ... + n\tan(\theta_n) = 2,$$
which leads to
$$\frac{X}{\sqrt{1-X^2}} + 2\frac{\frac{w_1}{w_2}X}{\sqrt{1-\left(\frac{w_1}{w_2}X\right)^2}} + ... + n\frac{\frac{w_1}{w_n}X}{\sqrt{1-\left(\frac{w_1}{w_n}X\right)^2}} = 2.$$
This last equation can be transformed into a polynomial equation of degree $n\,2^{n-1}$.
Hence,
the degree of the polynomial equations involved in this problem is unbounded.

It would be useful to know how likely is it for an instance
of the weighted region shortest path problem to be unsolvable.
If we know the sequence of regions that the shortest path goes through,
then we know that the path itself is made up of a sequence of line segments
passing through the interiors of the prescribed regions and bending only on the boundaries of these regions.  Furthermore, the shortest path is locally optimal between any two bendpoints.  That is, if we treat the bendpoints $u_i$ and $u_{i+3}$ as fixed, then the intermediate bendpoints $u_{i+1}$ and $u_{i+2}$ must be optimal with respect to $u_i$ and $u_{i+3}$.  This implies that any instance of the weighted region shortest path problem in which the shortest path goes through at least three regions, will contain a generalization of the given counter-example. In particular, the equations involved in the solution will have the same form, but with different coefficients. This will be true, except in very specific cases. Thus, a generic instance of the weighted region shortest path problem in which the path passes through at least three regions is more likely to be unsolvable.

\section{Conclusions and Future Work} \label{sec::conclusions}

The method we employed, Bajaj's technique, will be a  useful tool-kit to prove similar unsolvability results and guide  more realistic analysis of problems in computational geometry with algebraic components.
When the degree of the polynomial equations involved in the solution of a problem is unbounded,
then an unsolvability result like the one presented in this paper justifies the search for an approximate solution.

\bibliographystyle{plain}
\bibliography{ARXIV_unsolvability_wrsp}

\begin{thebibliography}{10}

\bibitem{aleksandrov2005determining}
Lyudmil Aleksandrov, Anil Maheshwari, and J{\"o}rg-R{\"u}diger Sack.
\newblock Determining approximate shortest paths on weighted polyhedral
  surfaces.
\newblock {\em J. ACM}, 52:25--53, January 2005.

\bibitem{CAT0}
Federico Ardila, Megan Owen, and Seth Sullivant.
\newblock Geodesics in {C}{A}{T}(0) cubical complexes.
\newblock {\em Advances in Applied Mathematics}, 48:142--163, 2012.

\bibitem{Bajaj_tech}
Chanderjit Bajaj.
\newblock The algebraic complexity of shortest paths in polyhedral spaces.
\newblock Technical Report 442, Purdue University, 1985.

\bibitem{DBLP:journals/dcg/Bajaj88}
Chandrajit~L. Bajaj.
\newblock The algebraic degree of geometric optimization problems.
\newblock {\em Discrete {\&} Computational Geometry}, 3:177--191, 1988.

\bibitem{bose2011survey}
Prosenjit Bose, Anil Maheshwari, Chang Shu, and Stefanie Wuhrer.
\newblock A survey of geodesic paths on 3d surfaces.
\newblock {\em Comput. Geom.}, 44(9):486--498, 2011.

\bibitem{DBLP:journals/corr/abs-1212-1617}
Jean-Lou~De Carufel, Amin Gheibi, Anil Maheshwari, J{\"o}rg-R{\"u}diger Sack,
  and Christian Scheffer.
\newblock Similarity of polygonal curves in the presence of outliers.
\newblock {\em CoRR}, abs/1212.1617, 2012.

\bibitem{eurocg}
Jean-Lou~De Carufel, Carsten Grimm, Anil Maheshwari, Megan Owen, and Michiel
  Smid.
\newblock Unsolvability of the weighted region shortest path problem.
\newblock In {\em Booklet of Abstracts of the 28th European Workshop on
  Computational Geometry}, pages 65--68, Assisi, Perugia, Italy, March 2012.

\bibitem{Cohen:1995:CCA:206777}
Henri Cohen.
\newblock {\em A Course in Computational Algebraic Number Theory}.
\newblock Springer, 1993.

\bibitem{GAP}
Andreas Distler.
\newblock Ein {A}lgorithmus zum {L}{\"o}sen einer {P}olynomgleichung durch
  {R}adikale.
\newblock Master's thesis, TU Braunschweig, 2005.
\newblock {Diplomarbeit}.

\bibitem{dummit}
David~S. Dummit and Richard~M. Foote.
\newblock {\em Abstract Algebra}.
\newblock John Wiley \& Sons, 3rd edition, 2003.

\bibitem{Geissler}
Katharina Geissler and J{\"u}rgen Kl{\"u}ners.
\newblock Galois group computation for rational polynomials.
\newblock {\em Journal of Symbolic Computation}, 30(6):653--674, 2000.

\bibitem{mitchell1991weighted}
Joseph S.~B. Mitchell and Christos~H. Papadimitriou.
\newblock The weighted region problem: finding shortest paths through a
  weighted planar subdivision.
\newblock {\em J. ACM}, 38(1):18--73, 1991.

\bibitem{Mitchell98geometricshortest}
Joseph~S.B. Mitchell.
\newblock Geometric shortest paths and network optimization.
\newblock In {\em Handbook of Computational Geometry}, pages 633--701. Elsevier
  Science Publishers B.V. North-Holland, 1998.

\bibitem{SDP_alg}
Jiawang Nie, Kristian Ranestad, and Bernd Sturmfels.
\newblock The algebraic degree of semidefinite programming.
\newblock {\em Mathematical Programming}, 122:379--405, 2010.

\end{thebibliography}

\end{document}